\newtheorem{theorem}{Theorem}[]
\newtheorem{corollary}[theorem]{Corollary}
\newtheorem{assumption}{Assumption}
\newtheorem{remark}{Remark}
\newcommand{\R}{\mathbb{R}}
\begin{document}

\title{The amazing power of dimensional analysis: Quantifying market impact\thanks{All authors acknowledge support by the Vienna Science and Technologie Fund (WWTF) through project MA14-008.
W. Schachermayer is additionally supported by WWTF project MA16-021 and by the Austrian Science Fund (FWF) under
grant P25815 and P28861.}}
\author{Mathias Pohl\thanks{University of Vienna, Faculty of Mathematics and Faculty of Business, Economics \& Statistics, Oskar-Morgenstern-Platz 1, 1090 Vienna, Austria, mathias.pohl@univie.ac.at} \and Alexander Ristig\thanks{University of Vienna, Faculty of Mathematics and Faculty of Business, Economics \& Statistics, Oskar-Morgenstern-Platz 1, 1090 Vienna, Austria, alexander.ristig@univie.ac.at} \and Walter Schachermayer\thanks{University of Vienna, Faculty of Mathematics, Oskar-Morgenstern-Platz 1, 1090 Vienna, Austria, walter.schachermayer@univie.ac.at} \and Ludovic Tangpi\thanks{University of Vienna, Faculty of Mathematics, Oskar-Morgenstern-Platz 1, 1090 Vienna, Austria, ludovic.tangpi@univie.ac.at}}
\date{\today}

\maketitle

 \begin{abstract}
 \noindent 
 {\textbf{Abstract}.} 
This note complements the inspiring work on dimensional analysis and market microstructure by Kyle and Obizhaeva \cite{kyle2017dimensional}. Following closely these authors, our main result shows by a similar argument as usually applied in physics the following remarkable fact. If the market impact of a meta-order only depends on four well-defined and financially meaningful variables, then -- up to a constant -- there is only one possible form of this dependence. In particular, the market impact is proportional to the square-root of the size of the meta-order.

This theorem can be regarded as a special case of a more general result of Kyle and Obizhaeva. These authors consider five variables which might have an influence on the size of the market impact. In this case one finds a richer variety of possible functional relations which we precisely characterize. We also discuss the analogies to classical arguments from physics, such as the period of a pendulum.

\textbf{Keywords}: Dimensional analysis; market impact; leverage neutrality.
\end{abstract}

\bigskip


\section{Introduction}\label{sec:intro}

Dimensional analysis is a well known line of arguments in physics. The idea is best explained by considering a classical example: The period of a pendulum.

The \emph{basic assumption} is that the period depends \emph{only}  on the following quantities:
\begin{itemize}
 \item the length $l$ of the pendulum, measured in meters,
 \item the mass $m$ of the pendulum, measured in grams,
 \item the acceleration $g$ caused by gravity, measured in meters per second squared.
 \end{itemize}
 The basic assumption amounts to the formula,
 \begin{equation}
 \label{eq:period}
 	\text{period} = f(l,m,g),
 \end{equation}
where the period is measured in seconds and $f$ is an -- a priori -- arbitrary function.

Of course, relation \eqref{eq:period} should not depend on whether we measure length by meters or inches, time by seconds or minutes, and mass by grams or pounds.
 Combining these three requirements with the \emph{ansatz}
 \begin{equation}\label{eq:ansatz}
 f(l,m,g) = \text{const} \cdot l^{y_1}m^{y_2}g^{y_3},
 \end{equation}
these requirements translate into three linear equations in the variables $y_1, y_2, y_3$.
The unique solution yields the well-known relation (see \cite{huntley1967dimensional} as well as Appendix~\ref{app:pendulum} below for the details)
\begin{equation}
\label{eq:sqrt-pendulum}
\text{period} = \text{const}\cdot \sqrt{\frac{l}{g}}.
\end{equation}
This result goes back as far as Galileo.
The elementary linear algebra used in the above argument has been formalized in proper generality in the nineteenth century and is known under the name of ``Pi-Theorem'' (see Section~\ref{sec:dim_anal} below). It is worth mentioning that in the present case, the \emph{ansatz} \eqref{eq:ansatz} does not restrict the generality of the solution \eqref{eq:sqrt-pendulum} (see Appendix~\ref{app:pendulum} below).
\newline

Kyle and Obizhaeva have applied this line of argument in \cite{kyle2017dimensional} to analyze the \emph{market impact} of a \emph{meta-order}: think of an investor who wants to buy (or sell) a sizeable amount of an underlying stock within a limited time (e.g. two days). Of course, when placing this \emph{meta-order} she will split it into smaller pieces, the actual orders, in some (hopefully) clever way. Nevertheless, we expect the quoted prices to move to the disadvantage of the agent. We call the expected size of this price movement, measured in percentage of the price, the \emph{market impact}, see \cite{bouchaud2010price}.

We start by identifying the variables (and their dimensions $[\cdot]$) which we expect to have an influence on the size of the market impact:
\begin{itemize}
	\item $Q\quad\,$ the size of the meta-order, measured in units of shares $[Q]=\mathbb{S}$,
	\item $P\quad\,$ the price of the stock, measured in units of money per share $[P]=\mathbb{U}/\mathbb{S}$,
	\item $V\quad\,$ the traded volume of the stock, measured in units of shares per time $[V]=\mathbb{S}/\mathbb{T}$,
	\item $\sigma^2\quad$ the squared volatility of the stock, measured in percentage of the stock price per unit of time $[\sigma^2]=\mathbb{T}^{-1}$.
\end{itemize}

These 4 variables are measured in the units of the 3 fundamental dimensions time $\mathbb{T}$, money $\mathbb{U}$ and shares $\mathbb{S}$. Now we formulate the following basic assumption.

\begin{assumption}\label{ass:4unknown}
The market impact $G$ depends \emph{only} on the above $4$ variables, i.e.
\begin{align}\label{eq:G}
	G &= g(Q,P,V,\sigma^2),
\end{align}
where the function $g:\R_+^4\rightarrow\R_+$ as well as the quantity  $G$ are invariant under changes of the units chosen to measure the ``dimensions'' $\mathbb{T}$, $\mathbb{U}$, $\mathbb{S}$.
\end{assumption}

We note that $G$ is a percentage of the quoted price of the stock; hence it is a ``dimensionless'' quantity, i.e. invariant under a change of the units in which $\mathbb{T}$, $\mathbb{U}$ and $\mathbb{S}$ are measured. We thus encounter an analogous situation as in the pendulum example. There is, however, a serious difference to the pleasant situation encountered above: We now have 4 variables, namely $Q$, $P$, $V$ and $\sigma^2$, but only 3 equations resulting from the scaling invariance for the fundamental dimensions $\mathbb{T}$, $\mathbb{U}$, $\mathbb{S}$. We need one more equation to obtain such a crisp result as in \eqref{eq:sqrt-pendulum} above. Kyle and Obizhaeva found a remedy; an additional ``no arbitrage'' type argument which can be deduced from transferring the Modigliani-Miller invariance principle to market microstructure. To fix ideas, consider a stock which is a share of a company. Suppose that the company changes its capital structure by paying dividends or, passing to the opposite sign, by raising new capital. The Modigliani-Miller theorem precisely tells us which quantities \emph{remain unchanged} when varying the leverage in terms of the relation between debt and equity of the company. This insight should furnish one more equation to be satisfied by \eqref{eq:G}. For the details we refer to Section~\ref{sec:laws} below. The subsequent assumption hints at this additional restriction which Kyle and Obizhaeva call ``leverage neutrality'' and is quoted from Proposition 1 in the seminal paper by Modigliani and Miller \cite{modigliani1958cost} (see Assumption~\ref{ass:MMunit} below for a more formal definition).

\begin{assumption}[Leverage neutrality]\label{ass:mm}
	 The market value of any firm is independent of its capital structure.
\end{assumption}

It turns out that this invariance indeed provides one more linear equation analogous to the equations obtained by the scaling arguments above. We therefore find ourselves in a perfectly analogous situation as with the pendulum and have the same number of equations as unknowns, namely four.

\begin{theorem}\label{thm:SqrtLawIntro}
	Under Assumptions~\ref{ass:4unknown} and \ref{ass:mm}, the market impact is of the form
	 	\begin{align}\label{eq:SqrtLawIntro}
			G &=\emph{const}\cdot \sigma\sqrt{\frac{Q}{V}},
		\end{align}
	 for some constant $\emph{const}>0$.
\end{theorem}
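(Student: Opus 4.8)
The plan is to reduce the statement to an elementary functional equation by combining the Buckingham Pi theorem with the scaling induced by leverage neutrality.

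First I would exploit Assumption~\ref{ass:4unknown}. Among the four variables $Q,P,V,\sigma^2$, the dimension $\mathbb{U}$ (money) occurs only in $P$, so $P$ cannot enter any dimensionless monomial; a one--line check shows that, up to a power, the only dimensionless combination is $\Pi:=\sigma^2 Q/V$ (indeed $[\sigma^2 Q/V]=\mathbb{T}^{-1}\cdot\mathbb{S}\cdot\mathbb{S}^{-1}\mathbb{T}=1$). Since $G$ is itself dimensionless and invariant under the three independent unit changes of $\mathbb{T}$, $\mathbb{U}$, $\mathbb{S}$, the Pi--theorem (Section~\ref{sec:dim_anal}) yields $G=\Phi(\sigma^2 Q/V)$ for some function $\Phi:\R_+\to\R_+$. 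This is the exact analogue of reducing the pendulum relation to its dimensionless core, and it already eliminates the variable $P$.

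Next I would bring in Assumption~\ref{ass:mm} in its precise form, Assumption~\ref{ass:MMunit}. Comparing two firms with identical assets but different leverage (equivalently, adding a pile of riskless cash to the balance sheet) multiplies the per--share price by some $\lambda>0$ while leaving the number of shares outstanding --- hence the meta--order size $Q$ and the share volume $V$ --- unchanged; since the cash carries no risk, the dollar volatility $P\sigma$ is preserved, so $\sigma^2\mapsto\lambda^{-2}\sigma^2$. Leverage neutrality then says that the \emph{dollar} price impact $P\,G$ of the meta--order is unaffected, i.e.\ $G\mapsto\lambda^{-1}G$. Feeding this into $G=\Phi(\sigma^2 Q/V)$ and using $\sigma^2 Q/V\mapsto\lambda^{-2}\sigma^2 Q/V$, I obtain
\[
\Phi(\lambda^{-2}x)=\lambda^{-1}\Phi(x),\qquad x>0,\ \lambda>0.
\]
Setting $\mu=\lambda^{-2}$ and evaluating at $x=1$ gives $\Phi(\mu)=\Phi(1)\,\mu^{1/2}$, so $\Phi(x)=c\sqrt{x}$ with $c=\Phi(1)>0$ (strict positivity because $\Phi$ takes values in $\R_+$). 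Substituting back, $G=c\,(\sigma^2 Q/V)^{1/2}=c\,\sigma\sqrt{Q/V}$, which is \eqref{eq:SqrtLawIntro}.

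Equivalently --- and this is the version that most directly mirrors the pendulum computation --- I could start from the monomial ansatz $G=c\,Q^{y_1}P^{y_2}V^{y_3}(\sigma^2)^{y_4}$: unit--invariance for $\mathbb{S}$, $\mathbb{U}$, $\mathbb{T}$ gives the homogeneous equations $y_1-y_2+y_3=0$, $y_2=0$, $y_3+y_4=0$, while leverage neutrality contributes the one \emph{inhomogeneous} equation $y_2-2y_4=-1$, and the resulting $4\times4$ system has the unique solution $(y_1,y_2,y_3,y_4)=(\tfrac12,0,-\tfrac12,\tfrac12)$. I expect the only genuine obstacle to be the step above: translating the Modigliani--Miller principle into the precise scaling $(Q,P,V,\sigma^2,G)\mapsto(Q,\lambda P,V,\lambda^{-2}\sigma^2,\lambda^{-1}G)$ --- this is the single genuinely economic input, and, crucially, the inhomogeneity it introduces is exactly what selects the exponent $\tfrac12$ instead of leaving the one--parameter family $G=\Phi(\sigma^2 Q/V)$. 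Once this scaling is in place, everything else is routine linear algebra and a one--line functional equation, as in the classical dimensional--analysis arguments.
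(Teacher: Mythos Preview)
Your argument is correct. The ``equivalently'' version at the end --- writing the four scaling constraints as a $4\times4$ linear system $By=a$ and solving for $y=(\tfrac12,0,-\tfrac12,\tfrac12)^\top$ --- is precisely the paper's proof: the paper treats leverage neutrality as a fourth synthetic dimension $\mathbb{M}$, checks that the resulting $4\times4$ dimension matrix has full rank, and invokes the $k=0$ case of the Pi-theorem (Corollary~\ref{thm:pi,k=0}). Your primary two-step route --- first apply the Pi-theorem with only the three physical units to reduce to $G=\Phi(\sigma^2 Q/V)$, then impose leverage neutrality as a separate functional equation $\Phi(\lambda^{-2}x)=\lambda^{-1}\Phi(x)$ forcing $\Phi(x)=c\sqrt{x}$ --- is a mild repackaging of the same linear algebra rather than a different idea: it corresponds to block-triangularizing the $4\times4$ system by first using the three rows for $\mathbb{S},\mathbb{U},\mathbb{T}$ (which leaves one degree of freedom, namely the exponent of the single dimensionless group $\sigma^2 Q/V$) and then using the $\mathbb{M}$-row to fix that last exponent. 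The two-step version has the pedagogical merit of making visible exactly which ingredient pins down the $\tfrac12$; the paper's one-shot version is more uniform and extends verbatim to the five-variable settings of Theorems~\ref{thm:Kyle} and~\ref{thm:time}.
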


In particular, we find the square-root dependence of the market impact on the order size $Q$ in accordance with several theoretical as well as empirical findings (see the review of the literature below).
\newline

In fact, the above line of arguments does not correspond exactly to what Kyle and Obizhaeva have done in \cite{kyle2017dimensional} (compare also \cite{kyle2016market}). They have considered one more variable which may have influence on the market impact. These authors suppose that the agent faces a cost $C$ when preparing the placement of a meta-order, which the authors refer to as ``bet cost''.\footnote{In the version of \cite{kyle2017dimensional}, released in July 2017, $C$ is defined as the unconditional expected dollar costs of executing a bet, i.e. meta-order.} This ``bet cost'' $C$ may vary independently  of the order size $Q$ as well as of the quantities $P$, $V$ and $\sigma^2$ discussed above. Hence, they consider an additional fifth quantity which might influence the market impact:
\begin{itemize}
	\item $C\quad\,$ the ``bet cost'', measured in units of money $[C]=\mathbb{U}$. 
\end{itemize}
In other words, Kyle and Obizhaeva only use the subsequent hypothesis which is weaker than Assumption 1 above.
\begin{assumption}\label{ass:5unknowna} The market impact $G$ depends \textit{only} on the above 5 variables, i.e.
\begin{align}\label{eq:G_Ass}
G=g(Q,P,V,\sigma^2,C),
\end{align}
where the function $g: \R^5_+ \rightarrow \R_+$ as well as the quantity  $G$ are invariant under changes of the units chosen to measure the ``dimensions'' $\mathbb{T}, \mathbb{U}$ and $\mathbb{S}$.
\end{assumption}

Starting from this weaker assumption Kyle and Obizhaeva apply a similar reasoning as above, in particular the  argument of leverage neutrality. This leads to a system of four linear equations in five unknowns. The solution is not unique anymore, but leaves us with a degree of freedom which is expressed by the function $f$ below.
\begin{theorem}[Kyle and Obizhaeva] \label{thm:Kyle}
Under Assumptions~\ref{ass:mm} and \ref{ass:5unknowna}, the market impact is of the form
\begin{equation*}
	G = \frac{1}{L}f(Z),
\end{equation*}
where $f:\R_+\rightarrow\R_+$ is a function and the quantities $L$ and $Z$ are given by 
\begin{align}\label{eq:LZ}
L = \left(\frac{PV}{\sigma^2C} \right)^{1/3}\quad \text{and}\quad Z = \left(\frac{Q^3P^2\sigma^2}{VC^2} \right)^{1/3}.
\end{align}
\end{theorem}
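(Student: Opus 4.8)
The plan is to run the same dimensional-analysis machinery as for the pendulum, now with a four-parameter symmetry group acting on the data $(Q,P,V,\sigma^2,C)\in\R_+^5$: three parameters from rescaling the fundamental units $\mathbb{T},\mathbb{U},\mathbb{S}$, and a fourth from the leverage-neutrality transformation of Assumption~\ref{ass:mm}. First I would make this fourth symmetry precise (as will be done in Section~\ref{sec:laws}): a leverage change by a factor $\lambda>0$ sends $P\mapsto\lambda P$ and $\sigma^2\mapsto\lambda^{-2}\sigma^2$, leaves the share quantities $Q,V$ and the dollar cost $C$ unchanged, and sends the percentage impact $G\mapsto\lambda^{-1}G$, because it is the dollar price impact $GP$, not its relative size, that is unaffected by a recapitalization. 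The functional equation to be solved is then the equivariance of $g$ in \eqref{eq:G_Ass} under this $(\R_+)^4$-action: invariance under the three unit rescalings, and transformation by $\lambda^{-1}$ under the leverage factor.

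Passing to logarithmic coordinates $(\log Q,\log P,\log V,\log\sigma^2,\log C)$, the four generators of the group become four explicit vectors in $\R^5$. The one computation that matters is to check that these four vectors are linearly independent, i.e. that the associated $4\times 5$ matrix has full rank $4$; this is short. Two consequences then follow by elementary linear algebra. First, the homogeneous system has a one-dimensional solution space, so up to powers there is a unique monomial in $Q,P,V,\sigma^2,C$ invariant under all four symmetries, and solving the system identifies it as $Z=(Q^3P^2\sigma^2/(VC^2))^{1/3}$. Second, the monomials transforming exactly like $G$ form a coset of that one-dimensional space, and one verifies directly that $L^{-1}=(\sigma^2 C/(PV))^{1/3}$ is a representative. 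Consequently $G\cdot L$ is invariant under the whole group; since full rank $4$ means the group orbits have maximal dimension $4$ in the five-dimensional data space and are precisely the level sets of $Z$, the invariant quantity $G\cdot L$ must be a function of $Z$ alone, i.e. $G=L^{-1}f(Z)$. Restricting first to the power ansatz $g=\mathrm{const}\cdot Q^{a_1}P^{a_2}V^{a_3}(\sigma^2)^{a_4}C^{a_5}$ turns both consequences into explicit linear systems and, by the Pi-theorem recalled in Section~\ref{sec:dim_anal}, loses no generality once the multiplicative constant is allowed to become an arbitrary function $f$ of the dimensionless combination $Z$.

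The only genuine obstacle --- everything else being bookkeeping --- is pinning down the leverage-neutrality symmetry: turning the Modigliani--Miller statement of Assumption~\ref{ass:mm} into the precise transformation rule for $P$, $\sigma^2$ and $G$ (and the invariance of $Q$, $V$, $C$) is a microstructure argument rather than pure dimensional analysis, and it is exactly this one extra equation that upgrades the underdetermined three-equation system of the naive unit analysis into a well-posed four-equation one.
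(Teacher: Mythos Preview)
Your proposal is correct and follows essentially the same route as the paper: formalize leverage neutrality as a fourth one-parameter scaling (the paper's Assumption~\ref{ass:MMunit}), assemble the resulting $4\times 5$ exponent matrix, verify it has rank~$4$, and apply the Pi-theorem (Corollary~\ref{thm:pi,k=1}) to obtain a single invariant $Z$ and a single equivariant prefactor $L^{-1}$. Your parameterization of the leverage symmetry is the reciprocal of the paper's ($\lambda=A^{-1}$), but that is immaterial, and your orbit-and-invariant phrasing is just a geometric restatement of what the paper does computationally via the solution spaces $\mathcal H$ and $\mathcal I$.
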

A priori, the generality of the function $f:\R_+\to \R_+$ is not restricted by Assumption~\ref{ass:5unknowna}.
Specializing further as in the \emph{ansatz} \eqref{eq:ansatz}, one may assume $f$ to be of the form $f(z)=\operatorname{const}\cdot z^p$, for some $p\geq0$. This implies that $G=\text{const} \cdot Z^{p}/L$. In particular, the choice $p=1/2$ leads precisely to the relation \eqref{eq:SqrtLawIntro} obtained in Theorem~\ref{thm:SqrtLawIntro} above. Other choices of $p$ lead to different relations, some of them already considered in the literature. Moreover, we would like to emphasize that the quantities $L$ and $Z$ have a financially meaningful interpretation in terms of measuring liquidity and the size of meta-orders (see \cite{kyle2017dimensional}).
\newline

The roadmap of this note is as follows. In Section~\ref{sec:review}, we provide a brief review of the existing literature. Section~\ref{sec:dim_anal} introduces some notation as well as the so-called Pi-Theorem from dimensional analysis, which is the key to rigorously prove Theorems 1 and 2 in Section~\ref{sec:laws}. Section~\ref{sec:conc} concludes. Appendix~\ref{app:pendulum} discusses the example mentioned in the introduction, namely the period of a pendulum, in somewhat more detail, while some proofs are moved to Appendix~\ref{app:math}.

\section[Literature review]{Literature review}\label{sec:review}
As pointed out in recent reviews \cite{bouchaud2009markets, foucault2013market}, market impact can arise from different sources. 
For instance, Kyle in his seminal paper \cite{kyle1985continuous} derives from an agent-based model that market impact should be linear in the order size and permanent in time. The majority of studies, however, does not support this conclusion of Kyle's model. Instead, a body of  literature finds market impact being non-linear in the order size and fading in time, e.g. \cite{bouchaud2009markets}. In particular, the market impact is frequently found to be concave in the size of the meta-order and especially close to the square-root function, which causes the name \emph{square-root law} for market impact, see \cite{bacry2015market, bershova2013nonlinear, brokman2015market, engle2012measuring, gomes2015market, mastromatteo2014agent, moro2009market, toth2011anomalous}. Among other results, a market microstructure foundation in favor of the square-root law is provided in \cite{gabaix2006institutional}. The broad evidence for the square-root law relies on studies having data from different venues, maturities, historical periods and geographical areas and thus provides the square-root law with universality. On the other hand, it deserves to be mentioned that some studies reveal empirically deviations from the square-root law, e.g. \cite{almgren2005direct,zarinelli2015beyond}.

Let us try to elaborate on the relation between dimensional analysis and a general theory by alluding once more to the analogy with the period of the pendulum. Complementary to the introductory example, relation \eqref{eq:sqrt-pendulum} from physics can, of course, also be derived  from solving differential equations. Analogously, the square-root law for market impact can also be derived via solving partial differential equations, see \cite{donier2015consistent}. These authors formulate the dynamics of the average buy and sell volume density of the \emph{latent order book} in terms of partial differential equations under minimal model requirements. While the latent order book is a theoretical concept which records the trading intentions of market participants, traders typically do not display their true supply and demand, so that the fictitious, non-public latent order book differs from the observed limit order book. As we derive the square-root law for market impact via dimensional analysis, Theorem~\ref{thm:SqrtLawIntro} complements the existing literature.

\section[Some linear algebra]{Some linear algebra}\label{sec:dim_anal}
To review the basic results of dimensional analysis we follow Chapter 1 of the book by Bluman and Kumei \cite{bluman2013symmetries}. Additionally, the interested reader is referred to \cite{pobedrya2006proof} for a historical perspective and to \cite{curtis1982dimensional} for a purely mathematical treatment of dimensional analysis. We formalize the assumptions behind dimensional analysis in proper generality. However, for the purpose of the present paper we shall only need the degree of generality covered by Corollaries~\ref{thm:pi,k=0} and \ref{thm:pi,k=1} below.
\newline

\begin{assumption}[Dimensional analysis]\label{ass:DimAnal} ~ \begin{enumerate}[(i)]
\item Let the quantity of interest $U \in \R_+$ depend on $n$ quantities $W_1,\dots,W_n \in \R_+$, i.e.
\begin{align}
	U = H(W_1,W_2,\dots,W_n), \label{ProblemGeneral}
\end{align}
for some function $H:\R_+^n \to \R_+$.
\item The quantities $U,W_1,\dots,W_n$ are measured in terms of $m$ fundamental dimensions labelled $L_1,\dots,L_m$, where $m \leq n$. For any positive quantity $X$, its dimension $[X]$ satisfies $[X] = L^{x_1}_1\cdots L^{x_m}_m$ for some $x_1,\dots,x_m \in \R$. If $[X]=1$, the quantity $X$ is called \emph{dimensionless}.

The dimensions of the quantities $U,W_1,W_2,\dots,W_n$ are known and given in the form of vectors $a$ and $b^{(i)}\in\R^m$, $i = 1, \dots, n$, satisfying $[U] = L_1^{a_1}\cdots L_m^{a_m}$ and $[W_i] = L_1^{b_{1i}}\cdots L_m^{b_{mi}}$, $i = 1,\dots,n$.
Denote by $B = (b^{(1)},b^{(2)}, \dots,b^{(n)})$ the $m \times n$ matrix with column vectors $b^{(i)}= (b_{1i},\dots, b_{mi})^\top$, $i=1,\ldots,n$.
\item For the given set of fundamental dimensions  $L_1,\dots,L_m$, a \emph{system of units} is chosen in order to measure the value of a quantity.  A change from one system of units to another amounts to rescaling all considered quantities. In particular, dimensionless quantities remain unchanged and formula \eqref{ProblemGeneral} is invariant under arbitrary scaling of the fundamental dimensions.
\end{enumerate}
\end{assumption}
We can now state the main result from dimensional analysis (see \cite{bluman2013symmetries}).
\begin{theorem}[Pi-Theorem]
\label{thm:pi}
Under Assumption \ref{ass:DimAnal}, let $x^{(i)} := (x_{1i},\dots,x_{ni})^\top$, $i=1,\dots,k:= n-\operatorname{rank}(B)$ be a basis of the solutions to the homogeneous system $Bx=0$ and $y:= (y_1,\dots,y_n)^\top$ a solution to the inhomogeneous system $By = a$ respectively. Then, there is a function $F:\R_+^k \to \R_+$ such that 
\begin{align*}
	U  \cdot W_1^{-y_1} \cdots W_n^{-y_n} =  F(\pi_1,\dots,\pi_k),
\end{align*}
where $\pi_i := W_1^{x_{1i}} \cdots W_n^{x_{ni}}$ are dimensionless quantities, for $i=1,\dots,k$.
\end{theorem}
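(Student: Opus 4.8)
The plan is to reduce the multiplicative bookkeeping of dimensions to plain linear algebra by passing to logarithms, and then to read the conclusion off the decomposition $\R^n=\operatorname{range}(B^\top)\oplus\ker B$. First I would set $w_j:=\log W_j$, $u:=\log U$, and define $h:\R^n\to\R$ by $h(v):=\log H(e^{v_1},\dots,e^{v_n})$, so that \eqref{ProblemGeneral} reads $u=h(w)$. A change of units that multiplies the numerical value of every quantity of dimension $L_j$ by a factor $\lambda_j>0$ multiplies $W_i$ by $\prod_{j=1}^m\lambda_j^{b_{ji}}$ and $U$ by $\prod_{j=1}^m\lambda_j^{a_j}$; writing $\mu_j:=\log\lambda_j$, the invariance of \eqref{ProblemGeneral} postulated in Assumption~\ref{ass:DimAnal}(iii) translates into the functional equation
\[
  h(w+B^\top\mu)=h(w)+a^\top\mu\qquad\text{for all }w\in\R^n,\ \mu\in\R^m,
\]
where $(B^\top\mu)_i=\sum_{j=1}^m b_{ji}\mu_j$. (Choosing $\mu\in\ker B^\top$ forces $a^\top\mu=0$, i.e. $a\in\operatorname{range}(B)$, so a solution $y$ of $By=a$ genuinely exists, consistent with the hypothesis.)

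Next I would pass to the ``reduced'' function $\phi(w):=h(w)-y^\top w=\log\!\big(U\,W_1^{-y_1}\cdots W_n^{-y_n}\big)$. Using $By=a$ together with $y^\top B^\top\mu=(By)^\top\mu=a^\top\mu$, the functional equation gives $\phi(w+B^\top\mu)=\phi(w)$ for all $w\in\R^n$ and $\mu\in\R^m$; that is, $\phi$ is invariant under translation by every vector in $\operatorname{range}(B^\top)$.

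Then comes the linear algebra. Let $\Pi:\R^n\to\R^k$ be the linear map $\Pi(w):=\big((x^{(1)})^\top w,\dots,(x^{(k)})^\top w\big)$. Since $x^{(1)},\dots,x^{(k)}$ form a basis of $\ker B$, the map $\Pi$ is surjective and $\ker\Pi=(\ker B)^\perp=\operatorname{range}(B^\top)$. By the previous paragraph $\phi$ is constant on each fibre of $\Pi$ — these fibres are exactly the cosets of $\operatorname{range}(B^\top)$ — so $\phi$ factors as $\phi=\tilde F\circ\Pi$ for some function $\tilde F:\R^k\to\R$. Translating back, $(x^{(i)})^\top w=\sum_{j=1}^n x_{ji}\log W_j=\log\pi_i$, while $\phi(w)=\log\!\big(U\,W_1^{-y_1}\cdots W_n^{-y_n}\big)$; exponentiating the identity $\phi=\tilde F\circ\Pi$ and setting $F(z_1,\dots,z_k):=\exp\tilde F(\log z_1,\dots,\log z_k)$, which maps $\R_+^k$ to $\R_+$, yields $U\,W_1^{-y_1}\cdots W_n^{-y_n}=F(\pi_1,\dots,\pi_k)$, as claimed.

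The only genuinely delicate point will be the first step: extracting the precise functional equation from the informal ``invariance under change of units'' in Assumption~\ref{ass:DimAnal}(iii), i.e. correctly bookkeeping how the exponent vectors $b^{(i)}$ and $a$ govern the rescaling of the $W_i$ and of $U$. Once that identity is in hand, everything is elementary linear algebra over $\R$; in particular, no continuity or measurability of $H$ (or of $F$) is used or asserted, and the statement is purely algebraic.
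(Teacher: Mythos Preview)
Your argument is correct and, in fact, cleaner than the paper's, but it follows a genuinely different route. The paper carries out an inductive, coordinate-by-coordinate elimination: it picks a unit $L_1$, chooses a column with $b_{11}\neq0$, forms the combinations $\widetilde V=-\tfrac{a_1}{b_{11}}\widetilde W_1+\widetilde U$ and $\widetilde X_{i-1}=-\tfrac{b_{1i}}{b_{11}}\widetilde W_1+\widetilde W_i$, and uses an arbitrary rescaling of $L_1$ alone to show that the resulting function no longer depends on $\widetilde W_1$; it then repeats this $\operatorname{rank}(B)-1$ more times. In effect the paper performs Gaussian elimination on $B$ step by step. You, by contrast, encode all rescalings at once in the single functional equation $h(w+B^\top\mu)=h(w)+a^\top\mu$, subtract the particular solution $y^\top w$, and then invoke the orthogonal decomposition $\R^n=\operatorname{range}(B^\top)\oplus\ker B$ to factor the invariant remainder through the projection $\Pi$ onto $\ker B$-coordinates. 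Your approach buys a one-shot, coordinate-free argument and, as a bonus, the observation that the invariance hypothesis itself forces $a\in\operatorname{range}(B)$, so the inhomogeneous system is automatically solvable; the paper's approach is more concrete and closer to how one would actually compute the $\pi_i$ by hand. Both use only elementary linear algebra and neither needs any regularity of $H$.
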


We shall only need the special cases $k=0$ and $k=1$, which are spelled out in the two subsequent corollaries.

\begin{corollary}
\label{thm:pi,k=0}
Under Assumption~\ref{ass:DimAnal}, suppose that $\operatorname{rank}(B)=n$ and let $y:=(y_1,\dots, y_n)^\top$ be the unique solution to the linear system $By = a$. Then there is a constant $\operatorname{const}>0$ such that 
\begin{align*}
U = \operatorname{const} \cdot\,W_1^{y_1} \cdots W_n^{y_n}.
\end{align*}
\end{corollary}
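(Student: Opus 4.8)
The plan is to obtain the statement as the degenerate case $k=0$ of the Pi-Theorem (Theorem~\ref{thm:pi}). Since $\operatorname{rank}(B)=n$, the homogeneous system $Bx=0$ has only the trivial solution, so $k:=n-\operatorname{rank}(B)=0$ and there are no dimensionless quantities $\pi_i$ at all. In this case the ``function'' $F:\R_+^0\to\R_+$ appearing in Theorem~\ref{thm:pi} reduces to a single value, which I denote by $\operatorname{const}$, and the conclusion of that theorem reads $U\cdot W_1^{-y_1}\cdots W_n^{-y_n}=\operatorname{const}$. Because $U>0$ and each $W_i>0$, the left-hand side is strictly positive, so $\operatorname{const}>0$, and rearranging yields $U=\operatorname{const}\cdot W_1^{y_1}\cdots W_n^{y_n}$; uniqueness of $y$ is immediate from $\operatorname{rank}(B)=n$.

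For completeness I would also record the short self-contained argument that bypasses Theorem~\ref{thm:pi} and works directly from the scaling invariance in Assumption~\ref{ass:DimAnal}(iii). Rescaling the fundamental dimension $L_j$ by a factor $\lambda_j>0$ multiplies $W_i$ by $\prod_{j=1}^m\lambda_j^{b_{ji}}$ and $U$ by $\prod_{j=1}^m\lambda_j^{a_j}$, so invariance of \eqref{ProblemGeneral} gives, for all $\lambda_1,\dots,\lambda_m>0$ and all $W_1,\dots,W_n>0$,
\begin{equation*}
\Big(\textstyle\prod_{j=1}^m\lambda_j^{a_j}\Big)\,H(W_1,\dots,W_n)=H\Big(\textstyle\prod_{j=1}^m\lambda_j^{b_{j1}}W_1,\dots,\prod_{j=1}^m\lambda_j^{b_{jn}}W_n\Big).
\end{equation*}
Writing $\mu_j=\log\lambda_j$, the vector with entries $\log\prod_j\lambda_j^{b_{ji}}$ equals $B^\top\mu$, and since $\operatorname{rank}(B^\top)=\operatorname{rank}(B)=n$ this sweeps out all of $\R^n$ as $\mu$ ranges over $\R^m$. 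Hence, for any prescribed $(W_1,\dots,W_n)\in\R_+^n$, I can pick $\lambda$ with $\prod_j\lambda_j^{b_{ji}}=W_i$ for every $i$; evaluating the displayed identity at the reference configuration $W_1=\dots=W_n=1$ with this $\lambda$ gives $U=\big(\prod_j\lambda_j^{a_j}\big)H(1,\dots,1)$. Finally, using $a=By$ one obtains $\log\prod_j\lambda_j^{a_j}=a^\top\mu=y^\top B^\top\mu=\sum_i y_i\log W_i$, i.e.\ $\prod_j\lambda_j^{a_j}=\prod_i W_i^{y_i}$, so that $U=\operatorname{const}\cdot\prod_i W_i^{y_i}$ with $\operatorname{const}=H(1,\dots,1)>0$.

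In this self-contained derivation the only step that requires care is the change-of-variables argument: one must justify that $\operatorname{rank}(B)=n$ makes the scaling map onto $\R_+^n$, so that every admissible configuration of the $W_i$ is reachable from the reference configuration by a genuine change of units. It is also worth noting that existence of a solution $y$ to $By=a$ is not an extra hypothesis but is forced by dimensional homogeneity: if $a$ were not in the column space of $B$, one could choose $\mu\in\ker B^\top$ with $a^\top\mu\neq0$, thereby scaling $U$ while leaving every $W_i$ unchanged and contradicting the invariance of \eqref{ProblemGeneral}; uniqueness of $y$ then follows once more from $\operatorname{rank}(B)=n$.
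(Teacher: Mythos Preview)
Your proof is correct. Both arguments you give are valid, but neither quite matches what the paper does.

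Your first argument simply invokes Theorem~\ref{thm:pi} in the degenerate case $k=0$, reading $F:\R_+^0\to\R_+$ as a single number. This is legitimate, but the paper apparently did not want to rely on the $k=0$ reading of the Pi-Theorem's statement and instead reruns the inductive reduction from the proof of Theorem~\ref{thm:pi}: it stops after $\operatorname{rank}(B)-2$ steps with a relation $\widetilde U-\sum z_j\widetilde W_j=f(\widetilde X)$ in a single remaining variable $X$, and then carries out one more rescaling argument by hand to show $f$ (after a shift) must be constant. So the paper's proof is essentially ``Pi-Theorem proof, last step made explicit,'' whereas you treat $k=0$ as an honest special case of the theorem's conclusion.

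Your second, self-contained argument is genuinely different and in fact cleaner: because $\operatorname{rank}(B)=n$ forces $m=n$ and makes $B^\top$ invertible, the scaling action is transitive on $\R_+^n$, so every configuration is a rescaling of $(1,\dots,1)$ and the identity $a=By$ converts the scaling factor on $U$ directly into $\prod_i W_i^{y_i}$. This avoids the inductive elimination entirely and also makes transparent that $\operatorname{const}=H(1,\dots,1)$. The paper's route has the advantage of being a uniform continuation of the Theorem~\ref{thm:pi} machinery; yours has the advantage of being a one-line computation once surjectivity of the scaling map is noted. Your closing remark that solvability of $By=a$ is forced (rather than assumed) by dimensional homogeneity is a nice observation that the paper does not make explicit.
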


\begin{corollary}
\label{thm:pi,k=1}
Under Assumption~\ref{ass:DimAnal}, suppose that $\operatorname{rank}(B) = n - 1$ and let $x:= (x_1,\dots, x_n)^\top$ and $y:=(y_1,\dots, y_n)^\top$ be non-trivial solutions to the homogeneous and inhomogeneous systems $Bx = 0$ and $By = a$ respectively. Then there is a function $F: \R_+ \rightarrow \R_+$ such that
\begin{align*}
U = F(W_1^{x_{1}} \cdots W_n^{x_n}) W_1^{y_1} \cdots W_n^{y_n}.
\end{align*}
\end{corollary}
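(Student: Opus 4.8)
The assertion is exactly the case $k=1$ of the Pi-Theorem (Theorem~\ref{thm:pi}): when $\operatorname{rank}(B)=n-1$ the solution space of $Bx=0$ is one-dimensional, so any non-trivial $x$ already forms a basis of it, and a single dimensionless monomial $\pi$ appears. One could therefore simply invoke Theorem~\ref{thm:pi}, after observing that the conclusion is insensitive to the chosen solutions: replacing $x$ by $\mu x$ with $\mu\neq 0$, or $y$ by $y+\nu x$, only reparametrises $F$. Since Theorem~\ref{thm:pi} is merely cited, I would prefer to record the short, self-contained scaling argument in this special case.

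First I would fix the linear algebra. Because $\operatorname{rank}(B)=n-1$, after relabelling the $W_i$ I may assume the columns $b^{(1)},\dots,b^{(n-1)}$ of $B$ are linearly independent; then $b^{(n)}$ lies in their span, the kernel of $B$ is spanned by a vector whose last coordinate $x_n$ is nonzero, and a solution $y$ of $By=a$ exists by hypothesis. Two elementary remarks: the quantity $\pi:=W_1^{x_1}\cdots W_n^{x_n}$ is dimensionless, since its exponent vector with respect to $L_1,\dots,L_m$ equals $Bx=0$; and $U\cdot W_1^{-y_1}\cdots W_n^{-y_n}$ is dimensionless as well, its exponent vector being $a-By=0$. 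The aim is to show that this latter combination depends on $W=(W_1,\dots,W_n)$ only through $\pi$.

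The crux is the scaling step. By Assumption~\ref{ass:DimAnal}(iii), rescaling the fundamental dimension $L_j$ by $\lambda_j>0$ sends $W_i\mapsto\big(\prod_{j}\lambda_j^{b_{ji}}\big)W_i$ and $U\mapsto\big(\prod_{j}\lambda_j^{a_j}\big)U$ while leaving \eqref{ProblemGeneral} invariant, i.e. $H$ satisfies
\[
  \Big(\prod_{j}\lambda_j^{a_j}\Big)\,H(W_1,\dots,W_n)=H\Big(\prod_{j}\lambda_j^{b_{j1}}W_1,\dots,\prod_{j}\lambda_j^{b_{jn}}W_n\Big),\qquad \lambda\in\R_+^m .
\]
Given $W\in\R_+^n$, I would pick $\lambda$ so that the first $n-1$ rescaled variables become $1$; such a $\lambda$ exists precisely because $b^{(1)},\dots,b^{(n-1)}$ are linearly independent. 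Computing exponents with the relations $b^{(n)}\in\operatorname{span}\{b^{(1)},\dots,b^{(n-1)}\}$ and $a=By$, one finds that for this $\lambda$ the $n$-th rescaled variable equals $\pi^{1/x_n}$ and the rescaled value of $U$ equals $\big(U\,W_1^{-y_1}\cdots W_n^{-y_n}\big)\,\pi^{y_n/x_n}$. Substituting into the functional equation gives $U\,W_1^{-y_1}\cdots W_n^{-y_n}=\pi^{-y_n/x_n}\,H(1,\dots,1,\pi^{1/x_n})=:F(\pi)$, a function of $\pi$ alone, with values in $\R_+$ because $H$ is $\R_+$-valued; rearranging yields $U=F(W_1^{x_1}\cdots W_n^{x_n})\,W_1^{y_1}\cdots W_n^{y_n}$.

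I expect the only genuinely fiddly part to be the exponent bookkeeping in the scaling step — correctly tracking the normalisation $x_n$ of the kernel vector and confirming that the right-hand side truly collapses to a function of the single scalar $\pi$. Everything else — existence of the desired $\lambda$, and the fact that different admissible choices of $x$ and $y$ merely reparametrise $F$ — is routine once the rank hypothesis $\operatorname{rank}(B)=n-1$ is used.
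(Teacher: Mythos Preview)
Your proposal is correct. The paper's own proof of this corollary is precisely the one-line invocation of Theorem~\ref{thm:pi} that you identify in your first paragraph (``The result follows from a direct application of Theorem~\ref{thm:pi}''), so your observation that $k=n-\operatorname{rank}(B)=1$ and that any non-trivial $x$ already spans $\ker B$ is all that is needed.

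The self-contained scaling argument you add goes beyond what the paper does: it reproves the $k=1$ case directly by choosing $\lambda\in\R_+^m$ to normalise $n-1$ of the $W_i$ to $1$ and then reading off that the residual dependence is through $\pi$ alone. This is a clean and correct alternative; the exponent bookkeeping (in particular that the $n$-th rescaled variable becomes $\pi^{1/x_n}$ and the rescaled $U$ becomes $U\,W_1^{-y_1}\cdots W_n^{-y_n}\,\pi^{y_n/x_n}$) checks out using $Bx=0$ and $By=a$. Compared to the paper's inductive proof of Theorem~\ref{thm:pi} in logarithmic coordinates, your version is more direct for this special case and makes the role of the rank hypothesis (existence of the normalising $\lambda$) explicit; the paper's route has the advantage of handling general $k$ uniformly. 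Your remark that replacing $x$ by $\mu x$ or $y$ by $y+\nu x$ merely reparametrises $F$ is also a useful addition not spelled out in the paper.
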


\section[Market impact]{Market impact}\label{sec:laws}

The aim of this section is to formalize and prove Theorems~\ref{thm:SqrtLawIntro} and \ref{thm:Kyle} stated in the introduction by applying Corollaries~\ref{thm:pi,k=0} and \ref{thm:pi,k=1}. In order to derive the market impact function from these corollaries, we need to formalize Assumption~\ref{ass:mm} in the framework of Section~\ref{sec:dim_anal}. Therefore, we take a closer look at this assumption and hence, the behavior of the quantities $G, Q, P, V, \sigma^2$ and $C$ in case of changing the firm's leverage defined below. From a conceptual point of view, the assumption of leverage neutrality gives an additional constraint on their behavior. This constraint can be understood as an additional though synthetic dimension in our analysis, which we refer to as the Modigliani-Miller ``dimension'' $\mathbb{M}$. The Modigliani-Miller dimension $\mathbb{M}$ of a share of a company can be measured in terms of the leverage $\mathcal{L}$, i.e. the quantity 
\begin{align*}
	\mathcal{L} = \frac{\text{total assets}}{\text{equity}}.
\end{align*}
Multiplying $\mathcal{L}$ by a factor $A > 1$  is equivalent to paying out $(1-A^{-1})$ of the equity as cash-dividends. On the other hand, multiplying $\mathcal{L}$ by a factor $0<A<1$ corresponds to raising new capital in order to increase the firm's own capital by $(A^{-1}-1)$ times its equity. Following \cite{kyle2017dimensional}, Assumption~\ref{ass:mm} can be reformulated in the following way:

\begin{assumption}[Leverage neutrality]
\label{ass:MMunit}
Scaling the Modigliani-Miller ``dimension'' $\mathbb{M}$ by a factor $A \in \R_+$ implies that 
\begin{itemize}
\item $Q$, $V$ and $C$ remain constant,
\item $P$ changes by a factor $A^{-1}$,
\item $\sigma^2$ changes by a factor $A^2$,
\item $G$ changes by a factor $A$.
\end{itemize}
\end{assumption}
To recapitulate in prose: Setting $A=2$ corresponds to paying out half of the equity as dividends in the sense that each share yields a dividend of $(1-A^{-1})P=P/2$. The stock price, thus, is multiplied by $A^{-1}=1/2$ while the volatility $\sigma$ and the percentage market impact $G$ are multiplied by $A=2$.
The remaining quantities are not affected by changing the leverage, in accordance with the insight of Modigliani and Miller \cite{modigliani1958cost} and the recent work by  Kyle and Obizhaeva \cite{kyle2017dimensional}. As indicated in the introduction, this assumption is referred to as \textit{leverage neutrality}.
\newline

We now reformulate Theorem~\ref{thm:SqrtLawIntro} by replacing the informally stated Assumption~\ref{ass:mm} by the more formal Assumption~\ref{ass:MMunit} and provide a proof.

\begin{theorem} \label{thm:SqrtLawLater}
	Under Assumptions~\ref{ass:4unknown} and \ref{ass:MMunit}, the market impact is of the form
	 	\begin{align} \label{eq:sqrtLaw}
			G &=\operatorname{const}\cdot\,\sigma \sqrt{\frac{Q}{V}},
		\end{align}
	 for some constant $\operatorname{const}>0$.
\end{theorem}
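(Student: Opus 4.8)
The plan is to apply Corollary~\ref{thm:pi,k=0}, treating the ``leverage neutrality'' of Assumption~\ref{ass:MMunit} as a fourth fundamental dimension. First I would cast the problem into the setting of Assumption~\ref{ass:DimAnal}: the quantity of interest is $U=G$, depending on the $n=4$ quantities $W_1=Q$, $W_2=P$, $W_3=V$, $W_4=\sigma^2$, and the fundamental dimensions are the three physical ones $\mathbb{T},\mathbb{U},\mathbb{S}$ together with the synthetic Modigliani--Miller ``dimension'' $\mathbb{M}$, so that $m=4$. The content of Assumption~\ref{ass:MMunit} is exactly that $\mathbb{M}$-rescalings act on all involved quantities as an additional scaling symmetry, so that \eqref{eq:G} is invariant under $\mathbb{M}$-rescalings in the sense of Assumption~\ref{ass:DimAnal}(iii). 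This is the step that makes the count ``$4$ equations, $4$ unknowns'' go through, in perfect analogy with the pendulum example.

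Next I would record the dimension matrix. Reading exponents in the order $(\mathbb{T},\mathbb{U},\mathbb{S},\mathbb{M})$, we have $[Q]=\mathbb{S}$, $[P]=\mathbb{U}\mathbb{S}^{-1}\mathbb{M}^{-1}$, $[V]=\mathbb{T}^{-1}\mathbb{S}$ and $[\sigma^2]=\mathbb{T}^{-1}\mathbb{M}^{2}$, while the target is $[G]=\mathbb{M}$, since $G$ is dimensionless in the physical units but picks up a factor $A$ under an $\mathbb{M}$-rescaling by $A$. Thus
\begin{align*}
B = \begin{pmatrix} 0 & 0 & -1 & -1 \\ 0 & 1 & 0 & 0 \\ 1 & -1 & 1 & 0 \\ 0 & -1 & 0 & 2 \end{pmatrix}, \qquad a = \begin{pmatrix} 0 \\ 0 \\ 0 \\ 1 \end{pmatrix}.
\end{align*}
A one-line cofactor expansion gives $\det B = 2 \neq 0$, hence $\operatorname{rank}(B) = n = 4$ and $k = n - \operatorname{rank}(B) = 0$, so Corollary~\ref{thm:pi,k=0} applies and $G$ is necessarily of the monomial form $\operatorname{const}\cdot Q^{y_1}P^{y_2}V^{y_3}(\sigma^2)^{y_4}$ with $y$ the unique solution of $By=a$.

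It then remains to solve $By=a$, which is immediate: the $\mathbb{U}$-row gives $y_2=0$, the $\mathbb{T}$-row gives $y_3=-y_4$, the $\mathbb{S}$-row gives $y_1=-y_3=y_4$, and the $\mathbb{M}$-row gives $2y_4=1$, i.e.\ $y=(\tfrac12,0,-\tfrac12,\tfrac12)^\top$. Substituting back yields
\begin{align*}
G = \operatorname{const}\cdot Q^{1/2}\,P^{0}\,V^{-1/2}\,(\sigma^2)^{1/2} = \operatorname{const}\cdot \sigma\sqrt{\frac{Q}{V}},
\end{align*}
which is precisely \eqref{eq:sqrtLaw}.

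As for the main obstacle: the linear algebra above is entirely routine, and the real content is conceptual rather than computational. One must justify that $\mathbb{M}$ may legitimately be adjoined as a genuine fourth fundamental dimension — concretely, that the leverage transformation described in Assumption~\ref{ass:MMunit} is consistent with the functional relation \eqref{eq:G}, so that the function $g$ furnished by Assumption~\ref{ass:4unknown} does respect $\mathbb{M}$-scaling and Assumption~\ref{ass:DimAnal} is genuinely satisfied with $m=4$ and $n=4$. Once this identification is accepted (as argued in the discussion preceding the theorem, following Kyle and Obizhaeva), the equality of the number of dimensions and the number of variables forces $k=0$, and the square-root law is the only possibility.
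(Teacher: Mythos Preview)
Your proof is correct and follows essentially the same approach as the paper: set up the $4\times 4$ dimension matrix $B$ (including the synthetic $\mathbb{M}$-row coming from Assumption~\ref{ass:MMunit}) and the target vector $a$, observe that $B$ has full rank, invoke Corollary~\ref{thm:pi,k=0}, and solve $By=a$ to obtain $y=(\tfrac12,0,-\tfrac12,\tfrac12)^\top$. The only cosmetic difference is that the paper orders the rows as $(\mathbb{S},\mathbb{U},\mathbb{T},\mathbb{M})$ rather than your $(\mathbb{T},\mathbb{U},\mathbb{S},\mathbb{M})$, and it simply asserts full rank where you compute $\det B=2$.
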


\begin{proof}
Combining Assumptions~\ref{ass:4unknown} and \ref{ass:MMunit} with the dimensions of $Q, P, V$ and $\sigma^2$ introduced in Section~\ref{sec:intro}, we obtain that the matrix $B$ and the vector $a$ are given by
\begin{align}\label{eq:dim1}
B = \left( \begin{array}{rrrr}
1 &-1 & 1 & 0   \\
0 & 1 & 0 & 0   \\
0 & 0 &-1 &-1   \\
0 &-1 & 0 & 2   \\
\end{array}
\right) \quad\text{and}\quad a = \left( \begin{array}{c}
0 \\ 0 \\ 0 \\ 1
\end{array} \right).
\end{align}
Table~\ref{tab:DimensionTabel} summarizes how $B$ and $a$ can be derived and should be read as follows: Assume the measurement of dimension $\mathbb{S}$ referring to the unit of shares is rescaled by a factor $S$, then $Q$ changes by a factor $S$, $P$ changes by a factor $S^{-1}$, $V$ changes by a factor $S$, while $\sigma^2$ does not change. Likewise, the last row labelled by $\mathbb{M}$ indicates that if the 
leverage $\mathcal{L}$ of the firm
changes by a factor $A$, $Q$ does not change, $P$ changes by a factor $A^{-1}$, and so on.

As the matrix $B$ has full rank, i.e. rank$(B) = 4 = n$, and Assumption~\ref{ass:DimAnal} is satisfied, applying Corollary~\ref{thm:pi,k=0} yields
\begin{align*}
	G = \text{const}\cdot Q^{y_1}P^{y_2}V^{y_3}\sigma^{2y_4},
\end{align*}
for some constant const$ >0$, where $y=(y_1, y_2, y_3, y_4)^\top$ is the unique solution of the linear system $By=a$ which is given by $y=(\frac{1}{2}, 0, -\frac{1}{2}, \frac{1}{2})^\top$.
\end{proof}

Theorem~\ref{thm:SqrtLawLater} implies the well known \emph{square-root law} for market impact. We would like to highlight that the present derivation of the square-root law does not rely on economic, empirical or theoretical assumptions except the dependence of $G$ on $Q, P, V$ and $\sigma^2$ only, as well as leverage neutrality. Donier et al. \cite{donier2015consistent} present an alternative derivation of the square-root law relying on partial differential equations.
\newline

As discussed above, Kyle and Obizhaeva \cite{kyle2017dimensional} consider yet another variable to influence the market impact, namely the ``bet cost'' $C$ leading to the weaker Assumption~\ref{ass:5unknowna}. An economic motivation to include $C$ in the analysis is provided also in \cite{kyle2016market}. Based on Assumption~\ref{ass:5unknowna}, Kyle and Obizhaeva \cite{kyle2017dimensional} derive a more general result summarized in Theorem~\ref{thm:Kyle}. The methodology of Section~\ref{sec:dim_anal} can be employed to prove a similar result stated below. 

\begin{table}
\begin{center}
\begin{tabular}{c|rrrr:r|r}
 & $Q$ & $P$ & $V$ & $\sigma^2$ & $C$ &$G$ \\ \hline
$\mathbb{S}$ & 1 &-1 & 1 & 0 & 0 & 0\\
$\mathbb{U}$ & 0 & 1 & 0 & 0 & 1 & 0\\
$\mathbb{T}$ & 0 & 0 &-1 &-1 & 0 & 0 \\
$\mathbb{M}$ & 0 &-1 & 0 & 2 & 0 & 1\\
\end{tabular}
\caption{A labelled overview of the matrix $B$ related to the dimensions of the quantities $(Q,P,V,\sigma^2)$ and the matrix $K$ related to the dimensions of the quantities $(Q,P,V,\sigma^2,C)$ respectively, as well as the vector $a$ related to the dimensions of $G$.}
\label{tab:DimensionTabel}
\end{center}
\end{table}

First of all, the matrix $B$ used in the proof of Theorem~\ref{thm:SqrtLawLater} is extended by one column, which corresponds to $C$, to obtain the matrix 	
\begin{align}\label{eq:Kanda}
K = \left( \begin{array}{rrrrr}
1 &-1 & 1 & 0 & 0 \\
0 & 1 & 0 & 0 & 1 \\
0 & 0 &-1 &-1 & 0 \\
0 &-1 & 0 & 2 & 0 
\end{array}
\right).
\end{align}
The vector $a = \left(0, 0, 0, 1\right)^{\top}$ defined in \eqref{eq:dim1} remains unchanged. Table~\ref{tab:DimensionTabel} illustrates how the additional variable $C$ is related to the considered ``dimensions''.

Let us compute the solution space ${\cal H}$ of the homogeneous system $Kx=0$, which is given by the kernel of the linear map induced by the matrix $K$, as well as the solution space ${\cal I}$ of the inhomogeneous linear system $Ky=a$:
\begin{equation*}
\mathcal{H} = \left\lbrace \lambda \left( \begin{array}{r}
3 \\ 2 \\ -1 \\ 1 \\ -2
\end{array} \right), \lambda \in \R \right\rbrace
\quad
\text{and}
\quad	
\mathcal{I} = \left\lbrace  \left( \begin{array}{r}
-1 \\ -1 \\ 0 \\ 0 \\ 1
\end{array} \right)+ \lambda\left( \begin{array}{rrrrr}
3 \\ 2 \\ -1 \\ 1 \\ -2
\end{array} \right), \lambda \in \R \right\rbrace.
\end{equation*}
\begin{theorem}
\label{thm:generalLater}
Suppose Assumptions~\ref{ass:5unknowna} and \ref{ass:MMunit} hold. Fix $x=(x_1,\ldots,x_5)^{\top} \in \mathcal{H}$ and $y=(y_1,\ldots,y_5)^{\top} \in \mathcal{I}$. There is a function $f:\R_+\to \R_+$ such that
\begin{align} \label{eq:GeneralFunG}
	G = Q^{y_1} P^{y_2} V^{y_3} \sigma^{2y_4} C^{y_5} f\left(Q^{x_1} P^{x_2} V^{x_3} \sigma^{2x_4} C^{x_5}\right).
\end{align}
\end{theorem}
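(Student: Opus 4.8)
The plan is to apply Corollary~\ref{thm:pi,k=1} directly, after checking that Assumption~\ref{ass:DimAnal} is satisfied with the correct data. The quantity of interest is $U=G$ and the $n=5$ governing quantities are $W_1=Q$, $W_2=P$, $W_3=V$, $W_4=\sigma^2$, $W_5=C$. First I would record that Assumption~\ref{ass:5unknowna} gives exactly part (i) of Assumption~\ref{ass:DimAnal}, i.e.\ $G=g(Q,P,V,\sigma^2,C)$ with $g:\R_+^5\to\R_+$. For part (ii), I would read off the dimension matrix from Table~\ref{tab:DimensionTabel}: the four rows correspond to the fundamental ``dimensions'' $\mathbb{S},\mathbb{U},\mathbb{T},\mathbb{M}$ (so $m=4$), the columns to $Q,P,V,\sigma^2,C$, giving $K$ as in \eqref{eq:Kanda}, and the dimension vector of $G$ is $a=(0,0,0,1)^\top$ as in \eqref{eq:dim1}. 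Here the only substantive point to stress is that leverage neutrality (Assumption~\ref{ass:MMunit}) is what legitimises treating $\mathbb{M}$ as a genuine extra dimension: the scaling behaviour it prescribes ($Q,V,C$ fixed, $P\mapsto A^{-1}P$, $\sigma^2\mapsto A^2\sigma^2$, $G\mapsto AG$) is precisely what part (iii) of Assumption~\ref{ass:DimAnal} requires of the $\mathbb{M}$-row, so the invariance hypothesis of the Pi-Theorem holds with respect to all four dimensions simultaneously.

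Next I would compute $\operatorname{rank}(K)$. A quick row reduction of \eqref{eq:Kanda} shows the four rows are linearly independent, so $\operatorname{rank}(K)=4=n-1$, which is exactly the hypothesis of Corollary~\ref{thm:pi,k=1}. I would also note (or simply cite the display already in the excerpt) that the kernel $\mathcal H$ of $K$ is one-dimensional, spanned by $(3,2,-1,1,-2)^\top$, and that $(-1,-1,0,0,1)^\top$ is a particular solution of $Ky=a$, so $\mathcal I=(-1,-1,0,0,1)^\top+\mathcal H$; these are the vectors $x\in\mathcal H$ and $y\in\mathcal I$ fixed in the statement. It is worth remarking that the conclusion does not depend on which representatives $x$ and $y$ are chosen: replacing $x$ by $\mu x$ only reparametrises the argument of $f$, and replacing $y$ by $y+\mu x$ amounts to absorbing a power of $\pi_1=Q^{x_1}P^{x_2}V^{x_3}\sigma^{2x_4}C^{x_5}$ into $f$; in both cases one obtains a representation of the same form \eqref{eq:GeneralFunG} with a different function $f$. (This is the analogue of the corresponding remark for the pendulum.)

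With these two facts in hand, Corollary~\ref{thm:pi,k=1} applies verbatim: there is a function $F:\R_+\to\R_+$ with
\begin{align*}
G = F\bigl(Q^{x_1}P^{x_2}V^{x_3}\sigma^{2x_4}C^{x_5}\bigr)\,Q^{y_1}P^{y_2}V^{y_3}\sigma^{2y_4}C^{y_5},
\end{align*}
which is exactly \eqref{eq:GeneralFunG} upon renaming $F$ as $f$ and rearranging the factors. For concreteness one may choose $x=(3,2,-1,1,-2)^\top$ and $y=(-1,-1,0,0,1)^\top$; then the dimensionless argument is $Z^3/(\text{const})$ in the notation $Z=(Q^3P^2\sigma^2/(VC^2))^{1/3}$ and the prefactor is $C/(QP)=1/(LZ)\cdot(\text{const})$, recovering Theorem~\ref{thm:Kyle} after a further reparametrisation of $f$; I would include this identification as a short remark rather than as part of the proof proper.

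I do not expect any serious obstacle here — once Assumption~\ref{ass:DimAnal} is matched to the present data, the result is an immediate instance of Corollary~\ref{thm:pi,k=1}. The one place that deserves a careful sentence rather than a mechanical one is the verification that the Modigliani–Miller ``dimension'' $\mathbb{M}$ really does enter the hypotheses of the Pi-Theorem on the same footing as the physical dimensions $\mathbb{S},\mathbb{U},\mathbb{T}$: this is a modelling assumption (leverage neutrality), not a mathematical fact, and it is precisely the content of Assumption~\ref{ass:MMunit} that makes the scaling-invariance clause \ref{ass:DimAnal}(iii) applicable to the $\mathbb{M}$-row of Table~\ref{tab:DimensionTabel}. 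Everything after that — computing $\operatorname{rank}(K)=4$ and reading off $\mathcal H$ and $\mathcal I$ — is routine linear algebra that can be stated without detailed computation.
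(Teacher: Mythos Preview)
Your proposal is correct and follows essentially the same route as the paper: verify that Assumptions~\ref{ass:5unknowna} and~\ref{ass:MMunit} supply the data of Assumption~\ref{ass:DimAnal} with $K$ and $a$ as in \eqref{eq:Kanda} and \eqref{eq:dim1}, observe $\operatorname{rank}(K)=4=n-1$, and invoke Corollary~\ref{thm:pi,k=1}. The paper's own proof is a three-line version of exactly this; your additional remarks on the independence of the representation from the choice of $x\in\mathcal H$, $y\in\mathcal I$ and the recovery of Theorem~\ref{thm:Kyle} are correct but belong (as you note) outside the proof proper, which is where the paper places them as well.
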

\begin{proof}
Combining Assumptions~\ref{ass:5unknowna} and \ref{ass:MMunit} with the dimensions of $Q, P, V$, $\sigma^2$ and $C$ introduced in Section~\ref{sec:intro}, we recover the matrix $K$ and the vector $a$ given in \eqref{eq:Kanda} and \eqref{eq:dim1} respectively. Since Assumption~\ref{ass:DimAnal} is satisfied and $\operatorname{rank}(K)=4$, applying Corollary~\ref{thm:pi,k=1} completes the proof.
\end{proof}

For example, by setting
\begin{align*}
	x = \left(1, \frac{2}{3}, -\frac{1}{3}, \frac{1}{3}, -\frac{2}{3}\right)^\top  \in \mathcal{H} \qquad \text{and} \qquad y = \left(0, -\frac{1}{3}, -\frac{1}{3}, \frac{1}{3},\frac{1}{3}\right)^\top \in \mathcal{I},
\end{align*}
Theorem~\ref{thm:generalLater} yields precisely the formula of Theorem~\ref{thm:Kyle} given in the introduction, i.e. 
\begin{align} \label{eq:KyleFunG}
	G &= \left( \frac{\sigma^2 C}{P V}\right)^{1/3} f  \left( \left(\frac{Q^3P^2\sigma^2}{VC^2} \right)^{1/3} \right)  \notag \\
&= \frac{1}{L} f\left(Z\right),
\end{align}
where $L$ and $Z$ are defined in \eqref{eq:LZ}. One may also consider other choices for $x \in \mathcal{H}$ and $y \in \mathcal{I}$, for example:
\begin{equation*}
	x = \left(3, 2, -1, 1, -2\right)^\top  \in \mathcal{H} \qquad \text{and} \qquad y = \left(\frac{1}{2}, 0,-\frac{1}{2}, \frac{1}{2}, 0\right)^\top \in \mathcal{I}.
\end{equation*}
Formula \eqref{eq:GeneralFunG} then takes the form
\begin{align}
	G = \sigma \sqrt{\frac{Q}{V}} \ h \left(Z^3 \right) = \sigma \sqrt{\frac{Q}{V}} \ h \left(\frac{Q^3 P^2 \sigma^2}{V C^2} \right) .
	\label{eq:sqrt_gene}
\end{align}
If the function $h$ in \eqref{eq:sqrt_gene} is not a constant, this formula describes nicely the deviation from the square-root law \eqref{eq:sqrtLaw} in a multiplicative way.

\begin{remark}
It is important to note that \eqref{eq:KyleFunG} as well as \eqref{eq:sqrt_gene} are both \emph{the general solution} of the functional relation described by Theorem \ref{thm:generalLater} and therefore coincide.
The difference is that the (arbitrary) functions $f$ and $h$ are not identical, but rather in a one-to-one relation when passing from \eqref{eq:KyleFunG} to \eqref{eq:sqrt_gene}.
\end{remark}

As pointed out by Kyle and Obizhaeva \cite{kyle2017dimensional}, different choices of $f$ in equation~\eqref{eq:KyleFunG} (resp. $h$ in \eqref{eq:sqrt_gene}) lead to some particularly relevant market impact models studied in the literature.
\begin{enumerate}[(a)]
\item The proportional market impact: $f\equiv \operatorname{const}$ (resp. $h(x)=\operatorname{const} \cdot \ x^{-1/6}$)  leads to
\begin{align*}
G = \text{const} \cdot\left(\frac{\sigma^2 C}{P V}\right)^{1/3}.
\end{align*}
\item The square-root impact: $f(z)=\operatorname{const}\cdot z^{1/2}$ (resp. $h\equiv \operatorname{const}$) leads to
\begin{align*}
G &=\text{const}\cdot \sigma \sqrt{\frac{Q}{V}},
\end{align*}
the unique solution which does not depend on $C$.
\item The linear market impact: $f(z)=\operatorname{const} \cdot z$ (resp. $h(x)=\operatorname{const} \cdot \ x^{1/6}$) leads to
\begin{align*}
G = \text{const} \cdot Q \left(\frac{\sigma^4 P}{C V^2}\right)^{1/3}.
\end{align*}

\end{enumerate}
\begin{remark}
	It is also important to note that only two properties of the variable $C$ enter the above dimensional analysis: the ``dimension'' of $C$ is money, i.e. $[C]=\mathbb{U}$, and $C$ remains unchanged by scaling the Modigliani-Miller ``dimension'' $\mathbb{M}$ by a factor $A\in\R_+$. The above result, therefore, does not rely on the interpretation of the quantity $C$ as ``bet cost'' as considered in Kyle and Obizhaeva \cite{kyle2017dimensional}, but applies to any other quantity with the two aforementioned properties just as well.
\end{remark}

For example, an interesting alternative to $C$ enjoying these properties can be found in the work on the intraday trading invariance hypothesis by Benzaquen et al. \cite{benzaquen2016unravelling}. Rather than $C$, these authors consider the spread cost $\mathcal{C}$, which can be interpreted as the transaction cost incurred by trading $Q$ shares. More formally, denote by $S$ the bid-ask spread measured in units of money per share $[S]=\mathbb{U}/\mathbb{S}$.\footnote{
%
It should be noticed that the spread $S$ remains unchanged when scaling the Modigliani-Miller dimension $\mathbb{M}$ by a factor $A\in\mathbb{R}_+$. For instance, this can be inferred from an argument of Kyle and Obizhaeva \cite[Section 3: Empirical Evidence based on Bid-Ask Spreads and $\ldots$]{kyle2017dimensional}, whose empirical analysis uses that $G$ and $S/P$ have the same dimensional properties. Since the concept of leverage neutrality tells us precisely how $G$ and $P$ change when $\mathbb{M}$ is scaled by a factor $A\in\mathbb{R}_+$, the spread $S$ has to remain unchanged.} The spread cost $\mathcal{C}$ of a meta-order with size $Q$ is then defined by $\mathcal{C}:=Q S$ and hence measured in units of money $[\mathcal{C}]=\mathbb{U}$. Thus, the mathematical analysis above remains \emph{totally unchanged} when $C$ is replaced  by $\mathcal{C}$. 
In particular, the analogue of formula \eqref{eq:sqrt_gene} then reads as 
\begin{align}\label{eq:sqrt_spread}
	G = \sqrt{\frac{Q}{V}} \ h \left( \frac{Q^3P^2\sigma^2}{V \mathcal{C}}^2 \right).
\end{align}

To finish this section, we shall consider one more possible set of 5 explanatory variables which will lead us into a somewhat different direction. Instead of $C$ (or any appropriate alternative such as $\mathcal{C}$), we consider a variable with a different dimension, namely the length of the time interval $[0, T]$ over which the meta-order is executed. Clearly, the length $T$ is an obvious candidate to influence the market impact:
\begin{itemize}
	\item $T\quad\,$ the length of the execution interval, measured in units of time $[T]=\mathbb{T}$. 
\end{itemize}
In practice, the interval length $T$ can vary from a fraction of hours up to several days or even weeks. In an analogous manner as the ``bet cost'' $C$ enters Assumption~\ref{ass:5unknowna}, the subsequent assumption incorporates the length of the execution interval.

\begin{assumption}\label{ass:5unknown_time} The market impact $G$ depends \textit{only} on the variables $Q$, $P$, $V$, $\sigma^2$ and $T$, i.e.
\begin{align}\label{eq:G_Ass_time}
G=g(Q,P,V,\sigma^2,T),
\end{align}
where the function $g: \R^5_+ \rightarrow \R_+$ as well as the quantity $G$ are invariant under changes of the units chosen to measure the ``dimensions'' $\mathbb{T}, \mathbb{U}$ and $\mathbb{S}$.
\end{assumption}
Playing a similar game as above, the matrix $B$ used in the proof of Theorem \ref{thm:SqrtLawLater} is extended by one column corresponding to $T$ and now given by
\begin{align*}
K = \left( \begin{array}{rrrrr}
1 &-1 & 1 & 0 & 0 \\
0 & 1 & 0 & 0 & 0 \\
0 & 0 &-1 &-1 & 1 \\
0 &-1 & 0 & 2 & 0 
\end{array} \right).
\end{align*}
The vector $a$ given in \eqref{eq:dim1} remains unchanged.
The solution spaces ${\cal H}$ and ${\cal I}$ of the homogeneous system $Kx=0$ and the inhomogeneous system $Ky=a$ respectively, are given by
\begin{equation*}
	\mathcal{H} = \left\lbrace \lambda \left( \begin{array}{r}
-1 \\ 0 \\ 1 \\ 0 \\ 1
\end{array} \right), \lambda \in \R \right\rbrace
\quad
\text{and}
\quad 
\mathcal{I}= 
 \left\lbrace  \left( \begin{array}{r}
\frac{1}{2} \\ 0 \\ -\frac{1}{2} \\ \frac{1}{2} \\ 0
\end{array} \right)+ \lambda\left( \begin{array}{r}
1 \\ 0 \\ -1 \\ 0 \\ -1
\end{array} \right), \lambda \in \R \right\rbrace.
\end{equation*}
Under the assumptions of leverage neutrality and the exclusive dependence of the market impact $G$ on $Q$, $P$, $V$, $\sigma^2$ and $T$, dimensional analysis leads to the following result.

\begin{theorem}
\label{thm:time}
Suppose Assumptions~\ref{ass:MMunit} and \ref{ass:5unknown_time} hold. Fix $x=(x_1,\ldots,x_5)^{\top} \in \mathcal{H}$ and $y=(y_1,\ldots,y_5)^{\top} \in \mathcal{I}$. There is a function $f:\R_+\to \R_+$ such that
\begin{align*}
	G = Q^{y_1} P^{y_2} V^{y_3} \sigma^{2y_4} T^{y_5} f\left(Q^{x_1} P^{x_2} V^{x_3} \sigma^{2x_4} T^{x_5}\right).
\end{align*}
\end{theorem}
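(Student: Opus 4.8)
The plan is to mimic exactly the proofs of Theorems~\ref{thm:SqrtLawLater} and \ref{thm:generalLater}, since the only thing that changes is the dimensional content of the fifth variable. First I would verify that Assumption~\ref{ass:DimAnal} is satisfied in this setting: the quantity of interest is $U=G$, the $n=5$ explanatory quantities are $W_1=Q$, $W_2=P$, $W_3=V$, $W_4=\sigma^2$, $W_5=T$, and the $m=4$ fundamental ``dimensions'' are $\mathbb{S}$, $\mathbb{U}$, $\mathbb{T}$ and the synthetic Modigliani--Miller dimension $\mathbb{M}$. Reading off the dimensions $[Q]=\mathbb{S}$, $[P]=\mathbb{U}/\mathbb{S}$, $[V]=\mathbb{S}/\mathbb{T}$, $[\sigma^2]=\mathbb{T}^{-1}$ and $[T]=\mathbb{T}$, together with the leverage-neutrality behaviour prescribed by Assumption~\ref{ass:MMunit} (recalling that $T$, like $C$, is unaffected by scaling $\mathbb{M}$), produces precisely the matrix $K$ and the vector $a=(0,0,0,1)^\top$ displayed just before the statement; this is the content of the last row of Table~\ref{tab:DimensionTabel} with the $C$-column replaced by a $T$-column.

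Next I would check that $\operatorname{rank}(K)=4=n-1$. This is immediate since the first four columns of $K$ coincide with the matrix $B$ of \eqref{eq:dim1}, which was already observed to have rank $4$; adding a fifth column cannot decrease the rank, and the rank is at most $m=4$, so it equals $4$. Hence $k=n-\operatorname{rank}(K)=1$, and the kernel $\mathcal{H}$ is one-dimensional while the inhomogeneous solution set $\mathcal{I}$ is a single coset of $\mathcal{H}$; one verifies by direct substitution that the explicitly given generator $(-1,0,1,0,1)^\top$ lies in $\ker K$ and that $(\tfrac12,0,-\tfrac12,\tfrac12,0)^\top$ solves $Ky=a$, confirming the displayed $\mathcal{H}$ and $\mathcal{I}$.

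With these two facts in hand, the theorem is just an application of Corollary~\ref{thm:pi,k=1}: for any chosen $x=(x_1,\dots,x_5)^\top\in\mathcal{H}$ (a non-trivial solution of $Kx=0$) and any $y=(y_1,\dots,y_5)^\top\in\mathcal{I}$ (a solution of $Ky=a$), there is a function $f:\R_+\to\R_+$ with
\begin{align*}
	G = Q^{y_1} P^{y_2} V^{y_3} \sigma^{2y_4} T^{y_5}\, f\!\left(Q^{x_1} P^{x_2} V^{x_3} \sigma^{2x_4} T^{x_5}\right),
\end{align*}
which is exactly the claimed formula. I do not anticipate any genuine obstacle here: the argument is completely parallel to the proof of Theorem~\ref{thm:generalLater}, and the only point requiring a word of care is the justification that $T$ is invariant under the Modigliani--Miller rescaling (so that the fourth row of $K$ has a $0$ in the $T$-column) and that $[T]=\mathbb{T}$ contributes the stated entries to the first three rows; both are clear from the definition of $T$ as a physical time interval that has nothing to do with the firm's capital structure. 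One could additionally remark, as the paper does for $C$, that the analysis depends only on these two dimensional properties of $T$.
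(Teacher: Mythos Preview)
Your proposal is correct and follows essentially the same approach as the paper: the paper's own proof is the single line ``Since Assumption~\ref{ass:DimAnal} is satisfied and $\operatorname{rank}(K)=4$, the result follows by Corollary~\ref{thm:pi,k=1},'' and your argument simply spells out the verification of these two hypotheses (the identification of $K$ and $a$, and the rank computation via the full-rank submatrix $B$) before invoking the same corollary.
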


\begin{proof}
Since Assumption~\ref{ass:DimAnal} is satisfied and $\operatorname{rank}(K)=4$, the result follows by Corollary~\ref{thm:pi,k=1}.
\end{proof}
For instance, setting
\begin{equation*}
	x = \left(1, 0,-1,0,-1\right)^\top  \in \mathcal{H} \qquad \text{and} \qquad y = \left(\frac{1}{2}, 0,-\frac{1}{2}, \frac{1}{2}, 0\right)^\top \in \mathcal{I},
\end{equation*}
we obtain
\begin{equation}
\label{eq:GeneralSqrtLaw_time}
	G= \sigma\sqrt{\frac{Q}{V}} \ h\left(\frac{Q}{VT} \right).
\end{equation}
Similar to \eqref{eq:sqrt_gene}, where the ``bet cost'' $C$ appears as a quantity influencing the deviation of \eqref{eq:sqrt_gene} from  \eqref{eq:sqrtLaw}, the function $h$ in \eqref{eq:GeneralSqrtLaw_time} characterizes the deviation from the square-root law \eqref{eq:sqrtLaw} in dependence of the length of the execution interval $T$.

Donier et al. \cite{donier2015consistent} derive the square-root law based on a model taking the execution horizon $T$ into account. Thus, the question arises under which conditions on $T$ we recover the square-root law \eqref{eq:sqrtLaw}. The answer is simple: If the length of the execution interval $T$ depends exclusively on either or all of the quantities $Q$, $V$, $P$ and $\sigma^2$, Assumption~\ref{ass:5unknown_time} can be replaced by Assumption~\ref{ass:4unknown} and we are back in the setting of Theorem \ref{thm:SqrtLawIntro}, where the market impact obeys the square-root law. In practice, the condition that $T$ depends on $Q$, $P$, $V$ and $\sigma^2$ can be satisfied in case the investor determines the execution horizon $T$ according to the latter quantities.

\section[Conclusion]{Conclusion} \label{sec:conc}
The main contribution of this paper is a derivation of the \emph{square-root law} for market impact. The strong empirical support in favor of this \emph{law} provides it with a universal character. Inspiring for our work, Kyle and Obizhaeva \cite{kyle2017dimensional} derive a general form for the market impact function relying on dimensional analysis as well as the concepts of leverage neutrality and market microstructure invariance, where the \emph{square-root law} turns out to be a special case of their result. Complementary to their approach, we present a direct and simple derivation of the \emph{square-root law} by requiring only two assumptions: Firstly, the market impact of a given meta-order \emph{only} depends on its size, the corresponding stock price, the traded volume in the stock as well as its volatility. Secondly, we employ the concept of leverage neutrality as in \cite{kyle2017dimensional}. This idea is in line with the Modigliani-Miller invariance principle \cite{modigliani1958cost} and explains how the considered quantities behave when changing the leverage of a firm. Relying on these plausible assumptions, we apply dimensional analysis in a rigorous way to show that the market impact of a meta-order is proportional to the volatility as well as to the square-root of this order's size and inversely proportional to the square-root of the traded volume.

We also discuss several extensions of this result by including the following quantities as additional explanatory variables: the ``bet cost'' $C$ \eqref{eq:sqrt_gene} as in \cite{kyle2017dimensional}, the ``spread cost'' $\mathcal{C}$ \eqref{eq:sqrt_spread} as in \cite{benzaquen2016unravelling}, or the length $T$ of the execution interval \eqref{eq:GeneralSqrtLaw_time}.

\appendix

\section[The pendulum]{The pendulum}\label{app:pendulum}

In the setting of Theorem~\ref{thm:SqrtLawIntro}, somehow surprisingly, the market impact does not depend on the stock price, although -- a priori -- the price is included in our analysis. There is a simple explanation: The stock price is the only quantity in our analysis involving the ``dimension'' $\mathbb{U}$ of money.
Hence, in the setting of Theorem \ref{thm:SqrtLawIntro} it cannot play a role, because the market impact also does not involve the ``dimension'' $\mathbb{U}$ of money. 
In the following, we give a more detailed discussion of this argument in the form of an analogy to the case of the pendulum. In this example, the period also does not depend on the mass of the pendulum which - a priori - is considered as an explaining variable.

Consider a pendulum with length $l$ (measured in meters), mass $m$ (measured in grams) and period (measured in seconds).
Assume that the period depends only on $l, m$ and the acceleration $g$ of gravity (measured in meters per seconds squared).
That is, we assume that there is a function $f:\R_+^3\to \R_+$ such that 
\begin{equation*}
	\text{period} = f(l,m,g).
\end{equation*}
From Table~\ref{tab:DimensionPendulum}, we get the matrices
\begin{align*}
D = \left( \begin{array}{rrr}
1 &0 & 1   \\
0 & 1 & 0   \\
0 & 0 &-2   \\
\end{array}
\right) \quad\text{and}\quad c = \left( \begin{array}{c}
0 \\  0 \\ 1
\end{array} \right),
\end{align*}
which represent the dimensions of the quantities $(l, m, g)$ and the period respectively in the unit system meter, gram and seconds. As $D$ has full rank, it follows from Corollary~\ref{thm:pi,k=0} that 
\begin{equation*}
\text{period} = \operatorname{const}\cdot \ l^{y_1}m^{y_2}g^{y_3},
\end{equation*}
for some $\text{const} >0$, where the unique solution of the linear system $Dy = c$ is given by $y=(\frac{1}{2}, 0, -\frac{1}{2})^{\top}$.
Thus,
\begin{equation}
\label{eq:pendulum-first-sol}
\text{period} = \operatorname{const}\cdot \ \sqrt{\frac{l}{g}}.
\end{equation}
Why does this solution not involve the variable $m$? The answer is given by looking at the second row of $D$ and the second coordinate of $c$, which forces $y_2$ to equal zero. This is perfectly analogous to the role of the variable $P$, i.e. the price of the stock, in the setting of Theorem \ref{thm:SqrtLawIntro}.
\newline

Next, we shall illustrate the difference between Theorem \ref{thm:SqrtLawIntro} and \ref{thm:Kyle} by discussing an analogous variation of the assumptions in the case of the pendulum.
The crucial assumption in the reasoning above was that the period of the pendulum is completely determined by its length, its mass and the acceleration due to gravity. However, it is conceivable (and, in fact, the case) that the period also depends on other variables, e.g. the amplitude $a$ (measured in meters) of the observed swing. In other words, we might also start from the weaker assumption 
\begin{align}\label{eq:weaker}
\text{period} = f(l,m,g,a).
\end{align}
The matrix containing the dimensions of the observed quantities is now given by 
\begin{align*}
\widetilde{D} = \left( \begin{array}{rrrr}
1 &0 & 1&1   \\
0 & 1 & 0 &0   \\
0 & 0 &-2 &0  \\
\end{array}
\right).
\end{align*}
It follows from Corollary~\ref{thm:pi,k=1} that the general form of the relation \eqref{eq:weaker} is given by
\begin{equation*}
\text{period} = l^{y_1}m^{y_2}g^{y_3}a^{y_4}h(l^{x_1}m^{x_2}g^{x_3}a^{x_4}),
\end{equation*}
for some function $h:\R_+\to \R_+$, where $x = (x_1, x_2, x_3,x_4)^{\top}$ is a solution of the homogeneous system $\widetilde{D}x = 0$ and $y= (y_1, y_2, y_3, y_4)^{\top}$ a solution of the inhomogeneous system $\widetilde{D}y = c$. Choosing
\begin{equation*}
	x = \left(\frac{1}{2}, 0,-\frac{1}{2},0\right)^\top  \qquad \text{and} \qquad y = \left(1, 0,0, -1 \right)^\top,
\end{equation*}
we obtain
\begin{align}
\text{period} = \sqrt{\frac{l}{g}} \ h\left(\frac{l}{a}\right). \label{eq:pendulum-general}
\end{align}
In the setting of \eqref{eq:weaker}, dimensional analysis does not allow to determine the function $h$.
In order to do so, we need some additional information. In physics we have the possibility of experiments. Already Galileo noticed the -- at first glance surprising -- experimental result that the period of the pendulum does \emph{not} (at least not strongly) depend on the amplitude. Using this insight from experimental physics, we conclude that $h \equiv \operatorname{const}$ is a  physically reasonable choice in the general solution \eqref{eq:pendulum-general}.
\begin{table}
\begin{center}
\begin{tabular}{c|rrr:r|c}
	 & $l$ & $m$ & $g$ & $a$ & period\\
\hline
lenght & 1 & 0 & 1 & 1 & 0 \\
mass   & 0 & 1 & 0 & 0 & 0 \\
time   & 0 & 0 &-2 & 0 & 1  \\
\end{tabular}
\caption{A labelled overview of the matrix summarizing the dimensions of the quantities considered to determine the period of a pendulum.}
\label{tab:DimensionPendulum}
\end{center}
\end{table}
\newline

We conclude this discussion by making the analogy to the case of Theorems \ref{thm:SqrtLawIntro} and \ref{thm:Kyle} above.
Dimensional analysis alone cannot decide whether the special solution given by Theorem \ref{thm:SqrtLawIntro} is the ``true'' relation between the market impact and the relevant variables, or whether some other explanatory variables as provided, e.g., by \eqref{eq:sqrt_gene}, \eqref{eq:sqrt_spread}, or \eqref{eq:GeneralSqrtLaw_time} yield the ``true'' relation.
To answer this question one has to take recourse either to economic theory or to empirical analysis.
This is analogous to the above discussed situation of the pendulum where \emph{physical experiments} yield that the special case \eqref{eq:pendulum-first-sol} of the more general solution \eqref{eq:pendulum-general} is in fact the ``true" relation (as long as the amplitude remains within reasonable bounds).

To round up this discussion, we give an example how dimensional analysis can lead astray, if applied blindly. Start with the (silly) assumption that the period of the pendulum depends \emph{only} on the mass $m$, the acceleration $g$, and the amplitude $a$ so that 
\begin{equation*}
	\text{period} = f(m,g,a),
\end{equation*}
for some function $f:\R^3_+\rightarrow\R_+$. Repeating verbatim the analysis preceding \eqref{eq:pendulum-first-sol} we obtain 
\begin{equation}\label{eq:pendulum-silly}
	\text{period} = \operatorname{const} \cdot \sqrt{\frac{a}{g}},
\end{equation}
as the unique solution satisfying the invariance properties of dimensional analysis. But, of course, the solution \eqref{eq:pendulum-silly} is far from physical reality. The reason is that we have chosen \emph{a wrong set of explanatory variables}. In other words, dimensional analysis only yields reasonable solutions if the set of explanatory variables is well chosen and really contains essentially all the information necessary to determine the quantity of interest.

\section[Appendix]{Proof of the Pi-Theorem}\label{app:math}

\textbf{Proof of Theorem~\ref{thm:pi}:}
We pass to logarithmic coordinates by using the following notation: Given $Z \in \R_+$ we shall write $\widetilde{Z} = \log(Z)$. On the logarithm scale $\widetilde{U}$ 
	satisfies
	\begin{equation}
	\label{eq:utilde}
		\widetilde{U} = g\left(\widetilde{W}_1,\dots,\widetilde{W}_n\right),
	\end{equation}
	for some function $g:\R^n\to \R$.
	
	If the left and the right hand sides of \eqref{eq:utilde} do not depend on $L_1$, then it is sufficient to work with the units $(L_2,\dots,L_m)$.
	On the other hand, if $a_1\neq 0$ and $b_{11}=\dots=b_{1n}=0$ then $g\equiv 0$.
	If there exists $i \in \lbrace 1, \dots,n \rbrace$ such that $b_{1i}\neq 0$, we assume without loss of generality that $b_{11}\neq 0$.
	Putting $\widetilde{V}:= -\frac{a_1}{b_{11}}\widetilde{W}_1 + \widetilde{U}$ and $\widetilde{X}_{i-1}:= -\frac{b_{1i}}{b_{11}}\widetilde{W}_1+ \widetilde{W}_i$, $i=2,\dots,n$ we have
	\begin{align*}
		\widetilde{V} &= -\frac{a_1}{b_{11}}\widetilde{W}_1 + g\left(\widetilde{W}_1, \frac{b_{1i}}{b_{11}}\widetilde{W}_1+ \widetilde{X}_1,\dots,\frac{b_{1n}}{b_{11}}\widetilde{W}_1+ \widetilde{X}_{n-1} \right)\\
		          &= f\left( \widetilde{W}_1,\widetilde{X}_1,\dots,\widetilde{X}_{n-1}\right),
	\end{align*}
	for some function $f$.
	Let $\lambda \in \R$ and put $L^*_1:= e^{\lambda} L_1$.
	Since the dimensions of $\widetilde{V}$ and $ \widetilde{X}_{i-1}$, $i=2,\dots,n$ are given in terms of the units  $(L_1,\dots, L_m)$, the quantities $\widetilde{V}, \widetilde{X}_{1}, \dots, \widetilde{X}_{n-1}$ remain unchanged  upon passing to the system of units $(L^*_1,L_2,\dots,L_m)$. 
	On the other hand, $\log\left([{W}_1]\right) = -\lambda b_{11} + b_{11}\widetilde{L}^*_1 + \sum_{i=2}^mb_{i1}\widetilde{L}_i $  so that in the system of units $(L^*_1,L_2,\dots ,L_m)$ it holds
	\begin{equation*}
		\widetilde{V} = f\left( \lambda b_{11}+\widetilde{W}_1,\widetilde{X}_1,\dots,\widetilde{X}_{n-1}\right).
	\end{equation*}
	Since $\lambda$ was taken arbitrary, $f$ does not depend on the first component, that is, 
	\begin{equation}
	\label{eq:V-one-time}
		\widetilde{V} = f\left(\widetilde{X}_1,\dots,\widetilde{X}_{n-1}\right).
	\end{equation}
	By repeating the argument rank$(B)-1$ times, we obtain
	\begin{equation}
	\label{eq:utilde.lhs}
		\widetilde{U} - \sum_{j=1}^ny_j\widetilde{W}_j = h\left(\sum_{j=1}^nx_{1j}\widetilde{W}_j,\dots,\sum_{j=1}^nx_{kj}\widetilde{W}_j\right).
	\end{equation}
	In fact, since $x^{(i)}$ is a solution of the homogeneous system $Bx = 0$, the quantity $\sum_{j=1}^nx_{1j}\widetilde{W}_j$ is dimensionless.
	Notice that $Bx=0$ has $k = n - \text{rank}(B)$ linearly independent solutions.
	Similarly, since $y$ is a solution of the inhomogeneous system $By=0$, the left hand side in \eqref{eq:utilde.lhs} is a dimensionless quantity.
	Hence, there is a function $F:\R_+^k\to \R_+$ such that
	\begin{equation*}
		U\cdot W_1^{-y_1}\cdots W_n^{-y_n} = F(\pi_1,\dots,\pi_n),
	\end{equation*}
	with $\pi_j:= W_1^{x_{1j}}\cdots W_n^{x_{nj}}$, $j=1,\dots,k$.
%
\hfill $\square$\vspace*{3mm} \\
\textbf{Proof of Corollary~\ref{thm:pi,k=0}:}
First notice that if $k=0$ then $n = \text{rank}(B)$.
Repeating the argument leading to \eqref{eq:V-one-time} rank$(B)-2$ times, we can find $(z_1,\dots,z_{n-1})^{\top}\in  \R^{n-1}$, a quantity $X$ with dimension $[X]= L^{\alpha_m}_m$ and a function $f:\R\to \R$ such that
\begin{equation*}
\widetilde{Y}:=	\widetilde{U} -\sum_{j=1}^{n-1}z_j\widetilde{W}_j = f(\widetilde{X}).
\end{equation*}
Let us denote by $L^{c_m}_m$ the dimension of $Y$.
We can assume without loss of generality that $\alpha_m\neq 0$.
As in the proof of Theorem \ref{thm:pi}, we have
\begin{equation*}
	\widetilde{V}:=\widetilde{U} - \sum_{j=1}^{n-1}z_j\widetilde{W}_j - \frac{c_m}{\alpha_m}X = - \frac{c_m}{\alpha_m}X+ f(\widetilde{X}) = g(\widetilde{X}),
\end{equation*}
for some function $g$.
Let $\lambda \in \R$ and put ${L}^*_m = e^\lambda {L}_m $.
Since $\widetilde{V}$ is dimensionless, its value does not change when passing to the unit $L^*_m$.
On the other hand, $\log([X]) = -\alpha_m\lambda + \alpha_mL^*_m $.
Hence, with respect to the unit $L^*_m$ we have
\begin{equation*}
\widetilde{V} = g(\alpha_m\lambda + X).
\end{equation*}
Since $\lambda$ was taken arbitrary, the function $g$ must be a constant.
Thus, there is $\text{const}>0$ such that $U=\text{const}\cdot W_1^{y_1} \cdots W_n^{y_n}$, since the right hand side of the latter equation has the dimension of $U$, if and only if $By = a$.
\hfill $\square$\vspace*{3mm} \\
\textbf{Proof of Corollary~\ref{thm:pi,k=1}:}
The result follows from a direct application of Theorem~\ref{thm:pi}.
\hfill $\square$


\end{document}